\newtheorem{problem}{Problem}
\newtheorem{lemma}{Lemma}
\newtheorem{proof}{Proof}
\newtheorem{example}{Example}
\newtheorem{definition}{Definition}
\begin{document}
\title{GTRACE-RS: Efficient Graph Sequence Mining \\using Reverse Search}
\author{
Akihiro Inokuchi, Hiroaki Ikuta, and Takashi Washio\\
The Institute of Scientific and Industrial Research, 
Osaka University\\
8-1 Mihogaoka, Ibaraki, Osaka 567-0047 JAPAN\\
inokuchi@ar.sanken.osaka-u.ac.jp\\
}
\date{}
\maketitle

\begin{abstract}
The mining of frequent subgraphs from labeled graph data has been studied extensively. Furthermore, much attention has recently been paid to frequent pattern mining from graph sequences. A method, called GTRACE, has been proposed to mine frequent patterns from graph sequences under the assumption that changes in graphs are gradual. Although GTRACE mines the frequent patterns efficiently, it still needs substantial computation time to mine the patterns from graph sequences containing large graphs and long sequences. In this paper, we propose a new version of GTRACE that enables efficient mining of frequent patterns based on the principle of a reverse search. The underlying concept of the reverse search is a general scheme for designing efficient algorithms for hard enumeration problems. Our performance study shows that the proposed method is efficient and scalable for mining both long and large graph sequence patterns and is several orders of magnitude faster than the original GTRACE.
\end{abstract}

\section{Introduction}
\label{Sec:Intro}
Studies on data mining have established many approaches for finding characteristic patterns from a variety of structured data. 
Graph mining, which efficiently mines all subgraphs appearing more frequently than a given threshold from a set of graphs, focuses on the topological relations between vertices in the graphs~\cite{cookBook2}. 
AGM~\cite{inokuchi-pkdd00}, gSpan~\cite{yan02}, and Gaston~\cite{nijssen04} mine frequent subgraphs, starting with those of size~1, using the anti-monotonicity of the support values. 
Although the major algorithms for graph mining are quite efficient in practice, they require substantial computation time to mine complex frequent subgraphs, owing to the NP-completeness of subgraph isomorphism matching~\cite{garey79}. Accordingly, the conventional methods are not suitable for very complex graphs, such as graph sequences.

Graph sequences, however, are used extensively to model objects in many real-world applications.
For example, a human network can be represented as a graph, where a human and the relationship between two humans correspond to a vertex and an edge, respectively. If a human joins (or leaves) the community in the human network, the numbers of vertices and edges in the graph increase (or decrease). Similarly, a gene network consisting of genes and their interactions produces a graph sequence in the course of its evolutionary history by acquiring new genes, deleting genes, and mutating genes. 

\begin{figure}[b]
\centering
\includegraphics[height=13.3cm]{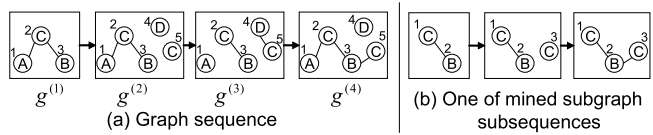}
\vspace{-114mm}
\caption{Examples of a graph sequence and subgraph subsequence for mining.}
\label{Pattern1}
\end{figure}

Recently, much attention has been paid to relevant frequent pattern mining\footnote{The relevancy of frequent patterns is defined in Section~2.2.} from graph sequences~\cite{inokuchi2008} (dynamic graphs~\cite{Borgwardt} or evolving graphs~\cite{Berlingerio}). 
Figure~\ref{Pattern1}(a) shows an example of a graph sequence containing 4 steps and 5 vertex IDs, denoted by the numbers attached to the vertices. 
The problem we address in this paper is how to mine patterns, as shown in Fig.~\ref{Pattern1}(b), that appear more frequently than a given threshold from a set of graph sequences. In~\cite{inokuchi2008}, Inokuchi and Washio proposed transformation rules (TRs) for representing graph sequences compactly under the assumption that the change in each graph in the graph sequence is gradual. In other words, only a small part of the structure changes, while the other part remains unchanged between two successive graphs $g^{(j)}$ and $g^{(j+1)}$ in the graph sequence. 
For example, the change between successive graphs $g^{(j)}$ and $g^{(j+1)}$ in the graph sequence shown in Fig.~\ref{change} is represented as an ordered sequence of two TRs $\langle vi_{[1,A]}^{(j)},ed_{[(2,3),\bullet]}^{(j)} \rangle$. 
This sequence of TRs implies that a vertex with vertex ID 1 and label $A$ is inserted~($vi$), and then the edge between the vertices with vertex IDs 2 and 3 is deleted ($ed$). By assuming that the change in each graph is gradual, we can represent a graph sequence compactly even if the graph in the graph sequence has many vertices and edges. Based on this idea, Inokuchi and Washio proposed a method, called GTRACE (\underline{g}raph \underline{tra}nsformation sequen\underline{ce} mining), for efficiently mining all frequent patterns, called relevant FTSs (frequent transformation subsequences), from ordered sequences of TRs~\cite{inokuchi2008}. 
In a similar manner to PrefixSpan~\cite{yan02}, GTRACE first recursively mines FTSs, appending a TR to the tail of the mined FTS, and then removes irrelevant FTSs during post-processing. Since most of the FTSs mined from graph sequences by GTRACE are irrelevant, if we mine only relevant FTSs from the graph sequences, we can greatly reduce the computation time for this mining process, thus enabling it to be applied to graph sequences containing large graphs and long sequences.  

Our objective graph sequence is more general than both the dynamic graph and evolving graph, and GTRACE and the proposed method in this paper are applicable to both dynamic graphs and evolving graphs, although methods~\cite{Borgwardt,Berlingerio} for mining relevant frequent patterns are not applicable to graph sequences.
In~\cite{Borgwardt}, Borgwardt et al.~proposed a method for mining relevant frequent patterns from a graph sequence represented by a dynamic graph. They assumed that the number of edges in a dynamic graph increases and decreases, while the number of vertices remains constant. They also assumed that labels assigned to vertices in the dynamic graph do not change and that no labels are assigned to edges. On the other hand, Berlingerio et al.~proposed a method to mine relevant frequent patterns from a graph sequence represented by an evolving graph~\cite{Berlingerio}. They assumed that the numbers of vertices and edges in an evolving graph increase, but do not decrease, and that labels assigned to vertices and edges in the dynamic graph do not change. In addition, a vertex in an evolving graph always comes with an edge connected to the vertex.

In this paper, we propose a new version of GTRACE that enables more efficient mining of only relevant FTSs based on the principle of a reverse search~\cite{Avis}. Our performance study shows that the proposed method is efficient and scalable for mining both long and large graph sequence patterns, and is several orders of magnitude faster than the original GTRACE.

\begin{figure}[t]
\centering
\includegraphics[height=9.7cm]{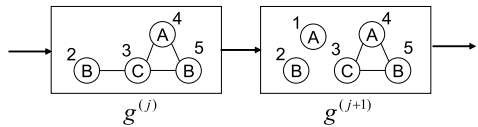}
\vspace{-80mm}
\caption{Change between two successive graphs.}
\label{change}
\end{figure}



\subsection{Frequent Graph Mining}
\label{sec:FrequentSubgraphMining}
Graph mining is the task of finding novel, useful, and ``understandable'' graph-theoretic patterns in a graph representation of data~\cite{cookBook}. Frequent graph mining is a representative task in graph mining that efficiently mines all subgraphs that appear more frequently than a given threshold from a set of labeled graphs.

A graph database $DB$ is a set of tuples $\langle gid, g\rangle$, where $gid$ is a graph ID and $g$ is a labeled graph. A tuple $\langle gid, g\rangle$ is said to contain a graph $p$, if $p$ is a subgraph of $g$, {\it i.e.}, $p \sqsubseteq g$. The support of graph $p$ in database $DB$ is the number of tuples in the database containing $p$, {\it i.e.}, 
$
\sigma(p)=|\{gid \mid (\langle gid, g\rangle \in DB) \wedge (p \sqsubseteq g)\}|.
$
Given a positive integer $\sigma'$ as the support threshold, a graph $p$ is called a ``frequent subgraph'' pattern in the graph database $DB$, if at least $\sigma'$ tuples in the database contain $p$, {\it i.e.}, $\sigma(p)\ge \sigma'$. Representative methods for frequent graph mining, such as AGM~\cite{inokuchi-pkdd00}, gSpan~\cite{yan02}, and Gaston~\cite{nijssen04}, mine frequent subgraphs starting with those of size 1, using the anti-monotonicity of the support values. 

We briefly review gSpan, because it is used to implement the method proposed in this paper. Here, for the sake of simplicity, we assume that edges in graphs have labels, whereas vertices in the graphs do not.
Given a frequent pattern $p$ with $n$ vertices and $k$ edges, vertices in $p$ are traversed in a depth first manner to assign vertex IDs from $v_1$ to $v_n$. The starting vertex and the last visited vertex in the traversal are called the root $v_1$ and the rightmost vertex $v_n$, respectively, while the straight path from $v_1$ to $v_n$ is called the rightmost path. According to the traversal, $p$ is represented by DFS code consisting of triplets $(u,u',l)$, where $l$ is an edge label between vertices $v_u$ and $v_{u'}~(1\le u,u'\le n)$. The linear order of the DFS codes is defined as follows. For DFS codes $\alpha=(a_1, a_2,\cdots,a_k)$ and $\beta=(b_1, b_2,\cdots,b_h)$, $\alpha \preceq \beta$, iff either of the following conditions is true:
\begin{itemize}
\item $\exists t, 1 \le t \le min(k,h), a_q=b_q~for~q<t, a_t \prec_e b_t$ 
\item $a_q=b_q~for~0 \le q \le k,~and~k \le h$
\end{itemize}
where $\prec_e$ is the linear order among the triplets $(u,u',l)$. Since there are many DFS codes for an identical $p$, the minimal DFS code of the DFS codes representing $p$ is defined as the canonical code for $p$.

\begin{figure}[t]
\centering
\includegraphics[height=58mm]{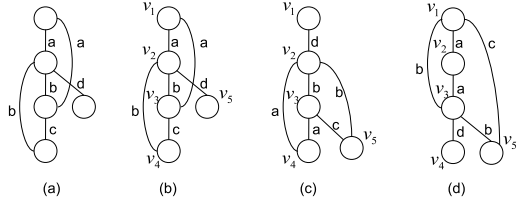}
\vspace{-30mm}
\caption{Depth-first traversal for a frequent pattern.}
\label{dfstree}
\end{figure}

\begin{table}[t]
\caption{DFS Codes for Fig.~\ref{dfstree}.}
\begin{center}
\begin{tabular}{l|l|l|l} \hline
edge $k$ & $\alpha$ (Fig.~\ref{dfstree}(b))  & $\beta$ (Fig.~\ref{dfstree}(c))  & $\gamma$ (Fig.~\ref{dfstree}(d)) \\ \hline \hline
1 & $(1,2,a)$ & $(1,2,d)$ & $(1,2,a)$ \\
2 & $(2,3,b)$ & $(2,3,b)$ & $(2,3,a)$ \\
3 & $(3,1,a)$ & $(3,4,a)$ & $(3,1,b)$ \\
4 & $(3,4,c)$ & $(4,2,a)$ & $(3,4,d)$ \\
5 & $(4,2,b)$ & $(3,5,c)$ & $(3,5,b)$ \\
6 & $(2,5,d)$ & $(5,2,b)$ & $(5,1,c)$ \\ \hline
\end{tabular}
\end{center}
\label{DFSCODE}
\end{table}

\begin{example}
Table~\ref{DFSCODE} gives the DFS codes for the different traversals shown in Fig.~\ref{dfstree}(b)-(d) of the frequent subgraph $p$ depicted in Fig.~\ref{dfstree}(a). According to the linear order of the DFS codes, $\gamma \prec \alpha \prec \beta$, and $\gamma$ is the canonical code for $p$.
\end{example}

A frequent subgraph $p$ is extended based on the pattern-growth principle. Given a DFS code $\alpha=(a_1, a_2,\cdots,a_k)$ for $p$ with $k$ edges, $p$ is extended by adding a new edge to obtain a new pattern with $k+1$ edges, which is represented by $\alpha'=(a_1, a_2,\cdots,a_k,a_{k+1})$. The new edge can be added between the rightmost vertex and other vertices on the rightmost path (backward extension), or it can introduce a new vertex and connect to vertices on the rightmost path (forward extension). Overall, new edges are only added to the vertices along the rightmost path. With this restricted extension, gSpan reduces the generation of the same frequent subgraphs. However, it still guarantees the completeness of enumerating all frequent subgraphs.

\section{Mining Graph Sequences}
\label{sec:Problem Definition}
\subsection{Representation}
In this section, we briefly review the compilation used to represent graph sequences compactly in GTRACE. Figure~\ref{Pattern1}(a) shows an example of a graph sequence. 
Graph $g^{(j)}$ is the $j$-th labeled graph in the sequence. The problem we address in this paper is how to mine patterns that appear more frequently than a given threshold from a set of graph sequences. 
In~\cite{inokuchi2008}, Inokuchi and Washio proposed TRs to represent graph sequences compactly under the assumption that ``the change over successive graphs is gradual''. In other words, only a small part of the graph changes between two successive graphs $g^{(j)}$ and $g^{(j+1)}$ in a graph sequence, while the other parts remain unchanged. 
In the aforementioned human and gene networks, this assumption certainly holds, because most of the changes in vertices are progressive over successive steps. 
A direct representation of a graph sequence is not compact, because many parts of the graph remain unchanged over several steps and are therefore redundant in the representation.
On the other hand, a graph sequence can be compactly represented by introducing a representation of graph transformation based on rules for insertion, deletion, and relabeling of vertices and edges under the gradual change assumption.

A labeled graph $g$ is represented as $g=(V,E,L,f)$, where $V=\{v_1,\cdots,v_z\}$ is a set of vertices, $E=\{(v,v')\mid (v,v') \in  V\times V\}$ is a set of edges, and $L$ is a set of labels such that $f:V \cup E \rightarrow L$. 
A graph sequence is represented as $d=\langle g^{(1)}~g^{(2)}\cdots g^{(n)}\rangle$.
We assume that each vertex $v$ is mutually distinct from the others in $g^{(j)}$ and has a vertex ID $id(v)$ in $d$. We define the set of vertex IDs to be $ID_V(d)=\{id(v) \mid v\in V(g^{(j)}), g^{(j)} \in d\}$ and the set of pairs of vertex IDs to be $ID_E(d)=\{(id(v),id(v'))\mid (v,v')\in E(g^{(j)}), g^{(j)} \in d\}$.
For example, in the human network mentioned in Section~\ref{Sec:Intro}, each person has a vertex ID, and his/her gender is an example of a vertex label.
To represent a graph sequence compactly, we focus on the differences between two successive graphs $g^{(j)}$ and $g^{(j+1)}$ in the sequence.
\begin{definition}
Given a graph sequence 
$d =\langle g^{(1)} \cdots g^{(n)}\rangle$, 
the differences between $g^{(j)}$ and $g^{(j+1)}$ are interpolated by a virtual sequence 
$d^{(j)}=\langle g^{(j,1)} \cdots  g^{(j,m_j)}\rangle $, where $g^{(j,1)}=g^{(j)}$ and 
$g^{(j,m_j)}=g^{(j+1)}$, such that the edit distance~\cite{Sanfeliu} between any two successive graphs is 1, and in which the edit distance between any two intrastates is the minimum. Therefore, $d$ is represented by the interpolations as $d=\langle d^{(1)}\cdots d^{(n-1)}\rangle$.~~~~~~~~~~~~~~~~~~~~~~~~~~~~~~~~~~~~~~~~~~~~~~~~~~~~~~~~~~~~~~~~~~~~~~~~~~~$\blacksquare$ 
\end{definition}
We call $g^{(j)}$ and $g^{(j,k)}$ an interstate and intrastate, respectively. 
The order of interstates represents the order of graphs in a sequence. On the other hand, the order of intrastates is the order of graphs in the artificial interpolation.

The transformation is represented by the following TR.
\begin{definition}
A TR that transforms $g^{(j,k)}$ to $g^{(j,k+1)}$ is expressed as $tr^{(j,k)}_{[o_{jk},l_{jk}]}$, where 
\begin{itemize}
  \item $tr$ is a transformation type, which is either insertion, deletion, or relabeling of a vertex or an edge,
  \item $o_{jk}$ is an element in $ID_V(d) \cup ID_E(d)$ to be transformed, and 
  \item $l_{jk} \in L$ is the label to be assigned to the element by the transformation.~$\blacksquare$ 
 \end{itemize}
\end{definition}

\begin{table}[t]
\begin{center}
\caption{Transformation rules (TRs) representing graph sequence data.}
\begin{tabular}{|l|l|} \hline
Vertex Insertion 		& Insert a vertex with label $l$ and vertex   \\
$vi^{(j,k)}_{[u,l]}$	& ID $u$ into $g^{(j,k)}$ to transform to $g^{(j,k+1)}$. \\
\hline
Vertex Deletion			    & Delete an isolated vertex with vertex   \\
$vd^{(j,k)}_{[u,\bullet]}$ 	& ID $u$ in $g^{(j,k)}$ to  transform to $g^{(j,k+1)}$.\\

\hline
Vertex Relabeling		& Relabel a vertex with vertex  ID $u$ in  \\ 
$vr^{(j,k)}_{[u,l]}$ 	& $g^{(j,k)}$ to be $l$ to transform to $g^{(j,k+1)}$.\\
\hline
Edge Insertion 				    & Insert an edge with label $l$ between 2     \\
$ei^{(j,k)}_{[(u_1,u_2),l]}$ 	& vertices with vertex IDs $u_1$ and $u_2$ into\\
                                &  $g^{(j,k)}$ to  transform to $g^{(j,k+1)}$.\\
\hline
Edge Deletion	 				  & Delete an edge between 2 vertices with  \\
$ed^{(j,k)}_{[(u_1,u_2),\bullet]}$& vertex IDs $u_1$ and $u_2$ in $g^{(j,k)}$ to \\
&transform to $g^{(j,k+1)}$.\\
\hline
Edge Relabeling					& Relabel an edge between 2 vertices   \\
{$er^{(j,k)}_{[(u_1,u_2),l]}$}	& with vertex IDs $u_1$  and $u_2$ in $g^{(j,k)}$ to be  \\
                                &  $l$ to transform to $g^{(j,k+1)}$.\\
\hline
\multicolumn{2}{l}{Since the transformations of vertex deletion $vd$ and edge }\\
\multicolumn{2}{l}{ deletion $ed$ do not assign any labels to the vertex and the}\\
\multicolumn{2}{l}{edge, respectively, they have dummy arguments $l$, }\\
\multicolumn{2}{l}{represented by `$\bullet$'. }
\end{tabular}
\label{tableTR}
\end{center}
\end{table}

For the sake of simplicity, we denote the TR $tr^{(j,k)}_{[o_{jk},l_{jk}]}$ as $tr^{(j,k)}_{[o,l]}$ by omitting the subscripts for $o_{jk}$ and $l_{jk}$, except where this is likely to cause ambiguity. In~\cite{inokuchi2008}, Inokuchi and Washio introduced six TRs as defined in Table~\ref{tableTR}. In summary, we give the following definition of a transformation sequence.

\begin{figure}[b]
\centering
\includegraphics[height=35mm]{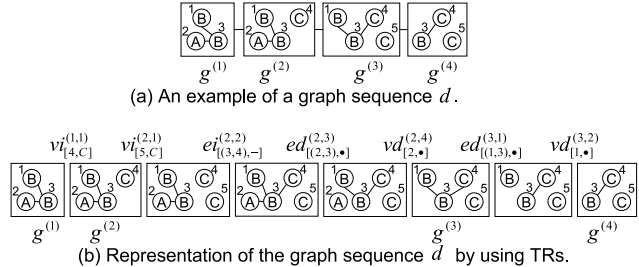}
\caption{A graph sequence with its transformation rules.}
\label{inputdata1}
\end{figure}

\begin{definition}
An intrastate sequence 
$d^{(j)}=\langle 
g^{(j,1)} 
\cdots 
g^{(j,m_j)} 
\rangle$ 
is represented by an ``intrastate transformation sequence''
$
s_d^{(j)}=\langle 
tr^{(j,1)}_{[o,l]} 
\cdots 
tr^{(j,m_j-1)}_{[o,l]} 
\rangle.$ 
Moreover, a graph sequence 
$
d=\langle 
g^{(1)}
\cdots 
g^{(n)}\rangle
$ is represented by an ``interstate transformation sequence'' 
$s_d=\langle 
s_d^{(1)}
\cdots 
s_d^{(n-1)}\rangle$.~~~~~~~~~~~~~~~~~~~~~~~~~~~~~~~~~~~~~~~~~~~~~~~~~~~~~$\blacksquare$ 
\label{def4}
\end{definition}
The notion of an interstate transformation sequence is far more compact than the original graph based representation, because only the differences between two successive interstates appear in the sequence. In addition, in our case, computing a sequence of TRs based on differences between two graphs is solvable in linear time because all vertices have vertex IDs.

\begin{example}
In Fig.~\ref{inputdata1}(a), a graph sequence is expressed as a sequence of insertions and deletions of vertices and edges as shown in Fig.~\ref{inputdata1}(b). 
The sequence is compiled as
$\langle
vi^{(1,1)}_{[4,C]}
vi^{(2,1)}_{[5,C]}
ei^{(2,2)}_{[(3,4),-]} 
ed^{(2,3)}_{[(2,3),\bullet]}
vd^{(2,4)}_{[2,\bullet]} 
ed^{(3,1)}_{[(1,3),\bullet]}
vd^{(3,2)}_{[1,\bullet]}
\rangle$, where ``$-$'' denotes the label of an edge. 
\end{example}

\subsection{Mining Relevant Frequent Transformation Subsequences}
\label{sec:EnumerationProblemandAlgorithm}
In this section, we briefly review how GTRACE mines rFTSs (\underline{r}elevant \underline{f}requent \underline{t}ransformation \underline{s}ubsequences) from a given set of graph sequences. 
To mine rFTSs from a set of transformation sequences, we define an inclusion relation between transformation sequences. 
\begin{definition}
\label{def:occurrence}
Given a transformation sequence $s_p$ of a pattern and a transformation sequence $s_d$ of a data graph sequence $d$,
$s_p$ is a subsequence of $s_d$, denoted as $s_p \sqsubseteq s_d$, iff there is a pair of injective functions $(\phi,\psi)$ satisfying
\begin{itemize}
\item there exist integers $1 \le \phi(1) < \phi(2) <\cdots < \phi(n) \le m$, 
\item there exist integers $\psi(u) \in ID_V(d)$ for vertex IDs $u$ in $s_p$, and
\item $\forall~tr_{[o,l]}^{(j,k)} \in s_p \Rightarrow  \exists k',~ tr_{[o',l]}^{(\phi(j),k')} \in s_d$, where $o'=\psi(u)$, if the TR $tr_{[o,l]}^{(j)}\in s_p$ transforms a vertex with vertex ID $u$. On the other hand, $o'=(\psi(u_1),\psi(u_2))$, if the TR transforms an edge with vertex IDs $u_1$ and $u_2$.~~~~~~~~~~~~~~~~~~~~~~~~~~~~~~~~~~~~~~~~~~~~~~~~~~~~~~~~~~~~~~~~~~~~~~~~~~~~~~~~~~~~~~~~~~~~~~~~~$\blacksquare$
\end{itemize}
\end{definition}
The first condition in Definition~\ref{def:occurrence} states that $\phi$ preserves the order among intrastate transformation sequences, while the second condition states an injective mapping from vertex IDs in $s_p$ to vertex IDs in $s_d$. In addition, the third condition states that a TR corresponding to any TR in $s_p$ must exist in $s_d$. The complexity of finding an occurrence of $s_p$ in $s_d$ is identical to that of subgraph isomorphism matching.

\begin{example} 
Given the graph sequence $d$ in Fig.~\ref{inclusion}(a) represented by the transformation sequence $s_d=
\langle
\underline{vi^{(1,1)}_{[4,C]}} 
           vi^{(2,1)}_{[5,C]} 
\underline{ei^{(2,2)}_{[(3,4),-]} }
\underline{ed^{(2,3)}_{[(2,3),\bullet]}}
\underline{vd^{(2,4)}_{[2,\bullet]}}
           ed^{(3,1)}_{[(1,3),\bullet]}
           vd^{(3,2)}_{[1,\bullet]}
\rangle$, 
the transformation sequence $s_d'
=\langle
vi^{(1,1)}_{[3,C]}
ei^{(2,1)}_{[(2,3),-]} 
ed^{(2,2)}_{[(1,2),\bullet]}
vd^{(2,3)}_{[1,\bullet]}
\rangle$ of the graph sequence $d'$ in Fig.~\ref{inclusion}(b) is a subsequence of $s_d$, and the TRs in $s_d'$ match the underlined rules in $s_d$ via $\phi(j)=j$ for $j \in \{1,2\}$ and $\psi(i)=i+1$ for $i \in ID_V(d')=\{1,2,3\}$.
\label{ex:subsequence}
\end{example}

To mine FTSs consisting of mutually relevant vertices only, the relevancy of vertices and edges is defined as follows\footnote{See~\cite{inokuchi2008,inokuchi2010} for the detail motivations for mining rFTSs.}.
\begin{definition}
Vertex IDs in a graph sequence $d=\langle 
g^{(1)}
\cdots 
g^{(n)}
\rangle$ are relevant to one another, and $d$ is called a ``relevant graph sequence'', if the union graph $g_u(d)$ of $d$ is a connected graph. We define the union graph of $d$ to be $g_u(d)=(V_u,E_u)$, where $V_u=ID_V(d)$ and $E_u=ID_E(d)$.~~~~~~~~~~~~~~~~~~~~~~~~~~~~~~~~~~~~~~~~~~~~~~~~~~~~~$\blacksquare$
\label{uniongraph1}
\end{definition}
Similar to Definition~\ref{uniongraph1}, we define the union graph of the transformation sequence $s_d$. 
\begin{definition}
The union graph $g_u(s_d)=(V_u,E_u)$ of a transformation sequence $s_d$ is similarly defined as
\begin{eqnarray}
V_u&=&\{u \mid tr_{[u,l]}^{(j,k)} \in s_d, tr \in \{vi,vd,vr\}\} \nonumber \\
		&&\cup \{u, u' \mid tr_{[(u,u'),l]}^{(j,k)} \in s_d, tr \in \{ei,ed,er\}\},  \nonumber \\
E_u&=&\{(u, u') \mid tr_{[(u,u'),l]}^{(j,k)} \in s_d, tr \in \{ei,ed,er\}\}.~~~~~~~~~~~~~~~\blacksquare \nonumber
\end{eqnarray}
\label{uniongraph2}
\end{definition}
\begin{example}
Figure~\ref{uniongraphs}(b) shows the union graph of the graph sequence depicted in Fig.~\ref{uniongraphs}(a). In addition, the union graph of a transformation sequence $\langle ei^{(1,1)}_{[(1,2),-]}\\ei^{(2,1)}_{[(2,3),-]}\rangle$ is identical to the graph shown in Fig.~\ref{uniongraphs}(b).
\end{example}

\begin{figure}[t]
\centering
\includegraphics[height=50mm]{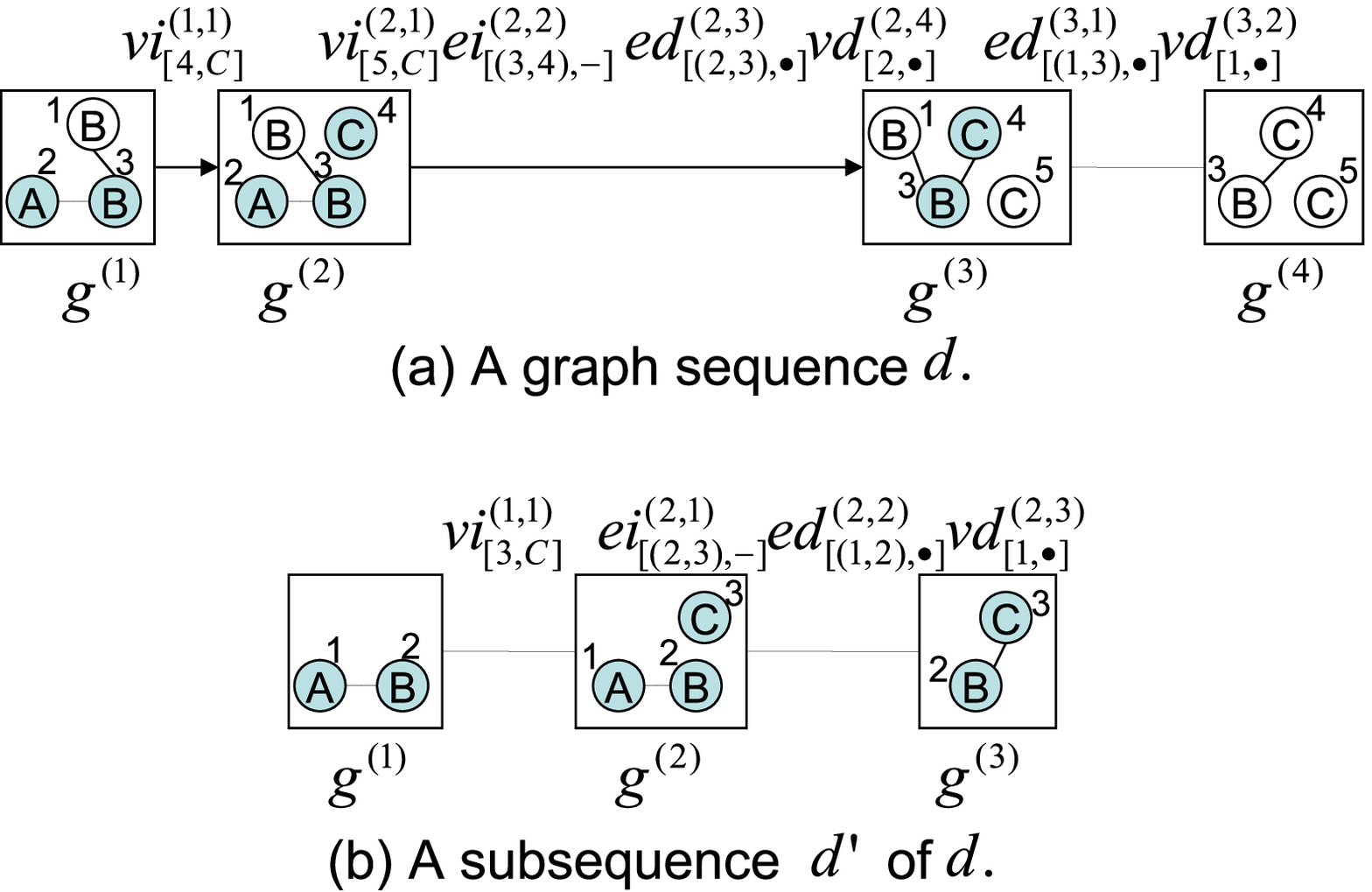}
\caption{Inclusion relation.}
\label{inclusion}
\end{figure}

\begin{figure}[t]
\centering
\includegraphics[height=18mm]{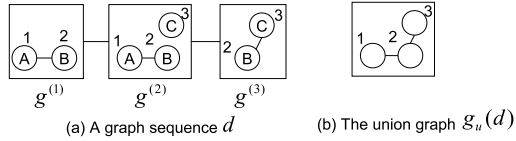}
\caption{Union graph.}
\label{uniongraphs}
\end{figure}

Given a set of data $DB=\{\langle gid, d\rangle \mid d=\langle g^{(1)}~\cdots~g^{(n)}\rangle \}$, the support value $\sigma(s_p)$ of a transformation subsequence $s_p$ is given as 
\[
\sigma(s_p)=|\{gid \mid \langle gid, d\rangle \in DB, s_p \sqsubseteq s_d\}|,
\]
where $s_d$ is the transformation sequence of $d$.
We call a transformation subsequence with support value greater than or equal to a minimum support threshold $\sigma'$ a ``frequent transformation subsequence'' (FTS). The anti-monotonicity of this support value holds; that is, if $s_1 \sqsubset s_2$ then $\sigma(s_1)\ge \sigma(s_2)$.
Using these definitions, we state our mining problem as follows.
\begin{problem}
Given a dataset $DB=\{\langle gid, d\rangle \mid d=\langle g^{(1)} \cdots  g^{(n)}\rangle \}$ and a minimum support threshold $\sigma'$ as input, enumerate all rFTSs.
\end{problem}
\begin{example}
Figure~\ref{miningproblem} shows a graph sequence database $DB$ that contains two graph sequences and nine rFTSs mined from the compiled graph sequences in $DB$ under $\sigma'=2$. In this example, $\langle vi^{(1,1)}_{[1,A]}vi^{(2,1)}_{[2,B]}\rangle $ and $\langle vi^{(1,1)}_{[1,B]}vi^{(2,1)}_{[2,A]}\rangle$ are common subsequences of the compiled graph sequences not mined from $DB$, because their union graphs are not connected.
\end{example}

\begin{figure}[t]
\centering
\includegraphics[height=60mm]{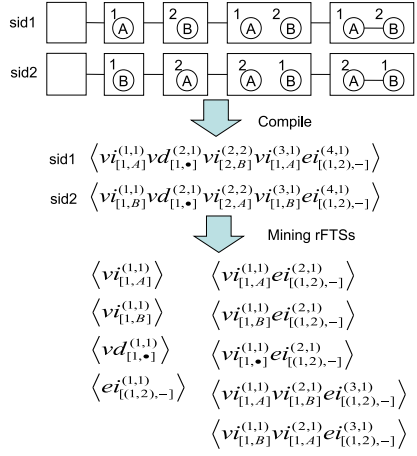}
\caption{Mining rFTSs from $DB$.}
\label{miningproblem}
\end{figure}
To enumerate all rFTSs efficiently, GTRACE recursively mines FTSs by appending a TR to the tail of the current FTS, in a similar manner to PrefixSpan~\cite{pei01}, which is a representative method for mining frequent subsequences from a set of itemset sequences. After mining all the FTSs, GTRACE removes all FTSs whose union graphs are not connected, thus outputting only rFTSs.

\begin{example}
Figure~\ref{GTRACEprocedure} shows the detailed procedure for mining FTSs up to $s_6$ using GTRACE. After mining an FTS $s_i$, GTRACE recursively appends a TR to $s_i$ to mine a longer FTS $s_{i+1}$. After mining all FTSs, 
$s_2$, $s_3$, and $s_4$ are removed during post-processing, because their union graphs are not connected.
\end{example}

\begin{figure}
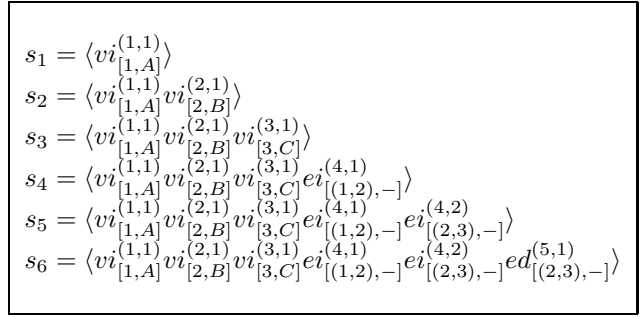

\begin{center}
\begin{tabular}{|l|}\hline 
\\
$s_1=\langle vi_{[1,A]}^{(1,1)}\rangle$ \\
$s_2=\langle vi_{[1,A]}^{(1,1)} vi_{[2,B]}^{(2,1)}\rangle$  \\
$s_3=\langle vi_{[1,A]}^{(1,1)} vi_{[2,B]}^{(2,1)} vi_{[3,C]}^{(3,1)} \rangle$\\
$s_4=\langle vi_{[1,A]}^{(1,1)} vi_{[2,B]}^{(2,1)} vi_{[3,C]}^{(3,1)} 
          ei_{[(1,2),-]}^{(4,1)} \rangle$  \\
$s_5=\langle vi_{[1,A]}^{(1,1)} vi_{[2,B]}^{(2,1)} vi_{[3,C]}^{(3,1)} 
          ei_{[(1,2),-]}^{(4,1)} ei _{[(2,3),-]}^{(4,2)} \rangle $ \\
$s_6=\langle vi_{[1,A]}^{(1,1)} vi_{[2,B]}^{(2,1)} vi_{[3,C]}^{(3,1)} 
          ei_{[(1,2),-]}^{(4,1)} ei _{[(2,3),-]}^{(4,2)} ed_{[(2,3),-]}^{(5,1)}\rangle $ \\
\\
\hline
\end{tabular}
\caption{Mining procedure using GTRACE.}
\label{GTRACEprocedure}
\end{center}
\end{figure}

\subsection{Drawback of GTRACE}
\label{Drawback}
In the previous subsection, we briefly reviewed the problem of graph sequence mining and GTRACE for mining rFTSs from graph sequences. GTRACE first mines a set of FTSs containing a complete set of rFTSs, and then removes any FTS that is not relevant from the set of mined FTSs during post-processing. Since most of the FTSs mined in the first step are not rFTSs, excessive computation time is needed to mine the vast set of FTSs, causing GTRACE to be highly inefficient. 

For example, the FTS $s_6$ shown in Fig.~\ref{GTRACEprocedure} is relevant, because its union graph is connected. To mine this rFTS, GTRACE mines $s_1$, $s_2$, $s_3$, $s_4$, and $s_5$ in order appending a TR to the tail of $s_i$ to obtain a new FTS $s_{i+1}$, where $i=2,\cdots,5$. Then, $s_2$, $s_3$, and $s_4$ are removed during post-processing, because they are irrelevant. Therefore, to mine all rFTSs using GTRACE, a vast set of FTSs, most of which are not rFTSs, first needs to be mined, resulting in inefficient execution of GTRACE. 
In our experiment, 95\% of FTSs mined by GTRACE are irrelevant. By designing a new algorithm that mines only rFTSs, we can reduce the computation time for mining the complete set of rFTSs. In the next section, we propose a new method for mining only rFTSs from a set of graph sequences based on the principle of a reverse search.

\section{Proposed Method}
\label{sec:algorithm3}

\subsection{Reverse Search}

To mine all rFTSs efficiently, we propose a method for mining only rFTSs based on a reverse search~\cite{Avis,Asai}. The underlying concept of the reverse search is a general scheme for designing efficient algorithms for hard enumeration problems. Let $S$ be the set of solutions. In a reverse search, we define the parent-child relation $P \in S \times S$ such that each $x \in S$ has a unique parent $P(x) \in S$. Using the search tree $T$ over $S$ defined by $P$, we can enumerate all solutions of $S$ without duplicates by traversing the search tree $T$ from the root to the leaves. 
Therefore, if we can define the parent-child relation $P$ such that each $x \in S$ has a ``unique'' parent, we can enumerate a complete set of elements in $S$.

For example, given a set of items $I=\{i_1,\cdots,i_{|I|}\}$ and a set of transactions $D=\{t \mid t\subseteq I\}$,  the frequent itemset mining problem is to enumerate all frequent itemsets $S=\{x \mid x\subseteq I, \sigma(x) \ge \sigma',~\sigma(x)=|\{t\mid x \subseteq t, t \in D\}| \}$, where $\sigma'$ is the minimum support threshold. To enumerate the complete set of frequent itemsets efficiently, the parent-child relation $P$ is defined as the unique itemset $x'=\{i_1,i_2, \cdots ,i_{k-1}\}$ derived from $x =\{i_1,i_2, \cdots ,i_{k-1},i_k\}$ by removing the last item $i_k$ in $x$, where the items in $x$ and $x'$ are sorted according to the linear order of the items.
 Starting from the root node corresponding to an empty set, traversing the search tree $T$ over $S$ defined by $P$ enables us to enumerate a complete set of frequent itemsets without duplicates. In the traversal, $P^{-1}(x')$ is used to enumerate itemsets from itemset $x'\in S$ by adding an item to $x'$. Figure~\ref{searchtree} shows the enumeration tree for the itemset $I=\{A,B,C,D\}$, with $2^I$ nodes in the tree. By traversing from the root node in a reverse direction to the arrows in the tree, all itemsets are enumerated. 

\begin{figure}[t]
\centering
\includegraphics[height=35mm]{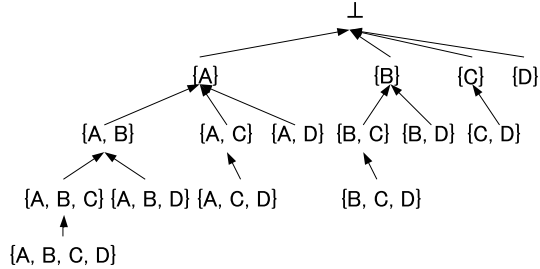}
\caption{Enumeration Tree for items $\{A,B,C,D\}$.}
\label{searchtree}
\end{figure}

However, as mentioned in Section~\ref{Drawback}, if the last TR in an rFTS $s_{i}~(i=2,\cdots,6)$ is removed to derive its parent FTS $s_{i-1}$, $s_{i-1}$ is not always relevant, although $s_{i-1}$ is always frequent according to the anti-monotonicity of the support values. Therefore, the original GTRACE needs all the FTSs to enumerate all the rFTSs, which results in inefficient execution of GTRACE. In this section, to enumerate only rFTSs, we define canonical forms of rFTSs and novel parent-child relations $P_1$, $P_2$, and $P_3$ between the canonical rFTSs. 
 
\begin{definition}
\label{def:canonical}
Given an rFTS $s$, TRs $a_k$, $a_{k-1}, \cdots, a_1$ in $s$ are removed by either $P_1$, $P_2$, or $P_3$ in order as defined later. We define a code $\alpha=(a_1, \cdots, a_k)$ for $s$, where vertex IDs in $g_u(s)$ are assigned in a depth-first manner similarly to gSpan.
For codes $\alpha=(a_1,\cdots,a_k)$ and $\beta=(b_1,\cdots,b_h)$, $\alpha \preceq \beta$, iff either of the following conditions is true:
\begin{itemize}
\item $\exists t, 1 \le t \le min(k,h), a_q=b_q~for~q<t, a_t \prec_{tr} b_t$ 
\item $a_q=b_q~for~0 \le q \le k,~and~k \le h,$
\end{itemize}
where $\prec_{tr}$ is the linear order among TRs $tr_{[o,l]}^{(j,k)}$. Since there are many representations for an identical transformation sequence, the representation corresponding to the minimal code among the representations of an identical transformation sequence is defined as canonical.~~~~~~~~~~~~~~~~~~~~~~~~~~~~~~~~~~~~~~~~~~~~~~~~~~~~~~~~~$\blacksquare$
\end{definition}
In Definition~\ref{def:canonical}, $P_1$, $P_2$, and $P_3$ are parent-child relations 
among canonical rFTSs $S$, which are defined as follows.
\begin{definition}
Given an rFTS $s \in S$ containing TRs applied to vertices, 
we define function $P_1$ mapping from $s$ to $s'$. The transformation sequence $s'$ is derived from $s$ 
by removing the TR located in the last position of all the TRs applied to vertices in $s$.
~~~~~~~~~~~~~~~~~~~~~~~~~~~~~~~~~~~~~~~~~~~~~~~~~~~~~~~~~~~~~~~~$\blacksquare$
\end{definition}
If the length of transformation sequence $s$ is defined as the number of TRs in $s$, the following lemma is obtained.
\begin{lemma}
\label{lemmaP1-1}
Given an rFTS $s \in S$ containing TRs that are applied to vertices and with length greater than 1, $g_u(P_1(s))=g_u(s)$.~~~~~~~~~~~~~~~~~~~~~~~~~~~~~~~~~~~~~~~~$\blacksquare$
\end{lemma}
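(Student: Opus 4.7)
The plan is to prove the union graph equality by separately establishing equality of the vertex set $V_u$ and the edge set $E_u$ as given in Definition~\ref{uniongraph2}. The edge-set part is essentially free: $P_1$ removes a vertex-type TR (one of $vi$, $vd$, $vr$), whereas $E_u$ is determined only by edge-type TRs ($ei$, $ed$, $er$). Since no edge TR is removed and none are added, $E_u(P_1(s)) = E_u(s)$ immediately.

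For the vertex set, I would first observe that $V_u(P_1(s)) \subseteq V_u(s)$ trivially, since deleting a TR cannot create new vertex IDs. For the reverse inclusion, every vertex ID in $V_u(s)$ other than the argument $u$ of the removed TR still occurs in at least one surviving TR and hence is in $V_u(P_1(s))$. So the whole argument reduces to showing that $u$ itself continues to appear in some TR of $P_1(s)$. I would prove this by a short case analysis driven by the relevance hypothesis. In the first case, $g_u(s)$ contains more than one vertex; because $s$ is an rFTS, $g_u(s)$ is connected, so $u$ has at least one neighbor $u'$. The edge $(u,u') \in E_u(s)$ must be witnessed by some edge TR with argument $(u,u')$, and $P_1$ leaves all edge TRs intact, so $u \in V_u(P_1(s))$. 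In the second case, $g_u(s)$ is the single isolated vertex $u$; then by Definition~\ref{uniongraph2} every TR of $s$ must be a vertex TR on $u$, and since $|s| > 1$ at least two such TRs exist, so one survives the removal of the last one, keeping $u$ in $V_u(P_1(s))$.

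The step I expect to require the most care is Case~2: one must argue that, with $g_u(s) = \{u\}$ and no edges, the assumption $|s|>1$ really does force a second vertex TR on $u$ to remain. This follows because any edge TR in $s$ would either introduce a second vertex ID into $V_u$ or, in the self-loop variant, would itself retain $u$ in $V_u(P_1(s))$; in either subcase $u$ is not lost. Combining the two cases yields $V_u(P_1(s)) = V_u(s)$, and together with the edge-set equality gives $g_u(P_1(s)) = g_u(s)$.
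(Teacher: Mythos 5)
Your proposal is correct and its core argument is the paper's own: since $s$ is relevant, $g_u(s)$ is connected, so the vertex $u$ of the removed TR is incident to an edge TR that $P_1$ leaves intact, which keeps $u$ (and hence the entire union graph) unchanged, while edge TRs are untouched by construction. If anything, your Case~2 is more careful than the paper's one-line proof, which infers the incident edge TR from connectivity alone and thus silently skips the degenerate situation where $g_u(s)$ is a single isolated vertex $u$ and only the length-greater-than-one hypothesis (forcing a second vertex TR on $u$ to survive) keeps $u$ in $g_u(P_1(s))$.
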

\begin{proof}
The union graph of $s$ is a connected graph, because $s$ is relevant. If the vertex ID, to which the TR $r$ removed by $P_1$ is applied, is $u$, a TR to transform an edge, whose terminal vertex ID is $u$, must exist in $s$, because $g_u(s)$ is connected. Therefore, vertex~$u$ remains in $g_u(P_1(s))$ after $r$ is removed, and the union graph of transformation sequence $P_1(s)$ 
is isomorphic with $g_u(s)$. On the other hand, given an rFTS $s \in S$ of length 1 containing a TR that is applied to a vertex, $P_1(s)=\bot$.
\end{proof}
According to Lemma~\ref{lemmaP1-1}, $P_1(s)$ is always relevant for an rFTS $s$ containing TRs applied to vertices. In addition, the transformation sequence $P_1(s)$ is frequent according to the anti-monotonicity of the support value and canonical according to Definition~\ref{def:canonical}. Therefore, $P_1(s)$ returns a ``unique'' rFTS in $S$.
\begin{example}
Given $s_6$ in Fig.~\ref{GTRACEprocedure}, $P_1(s_6)$ and $P_1(P_1(s_6))$ are 
\begin{eqnarray}
&&\langle vi_{[1,A]}^{(1,1)} vi_{[2,B]}^{(2,1)} ei_{[(1,2),-]}^{(3,1)} ei _{[(2,3),-]}^{(3,2)} ed_{[(2,3),-]}^{(4,1)}\rangle  \verb| and|  \nonumber \\
&&\langle vi_{[1,A]}^{(1,1)}ei_{[(1,2),-]}^{(2,1)} ei _{[(2,3),-]}^{(2,2)} ed_{[(2,3),-]}^{(3,1)}\rangle , \nonumber
\end{eqnarray}
respectively. Union graphs of $s_6$, $P_1(s_6)$, and $P_1(P_1(s_6))$ are isomorphic to the graph shown in Fig.~\ref{uniongraphs}(b).
\end{example}
Next, we define the second function of our parent-child relation to enumerate rFTSs.
\begin{definition}
\label{def:P2}
Given an rFTS $s' \in S$ that contains only TRs applied to edges, we define function $P_2$ 
mapping from $s'$ to $s''$. The transformation sequence $s''$ is derived from $s'$ 
by removing the TR $tr_{[o,l]}^{(j,k)}$ located in the last position of all the TRs in $s'$
if a TR $tr_{[o',l']}^{(j',k')}$ exists in $s'$ such that $o=o'$ and $j'<j$.~~~~~~~~~~~~~~$\blacksquare$
\end{definition}
If the rFTS $s$ contains TRs that are applied to vertices, we can obtain an rFTS $s'$ that does not contain TRs applied to vertices by applying $P_1$ to $s$ multiple times. In addition, $P_2$ is applicable to $s'$, if the length of $s'$ is greater than the number of edges in $g_u(s')$, 
since at least one TR needs to be applied to each edge in $E(g_u(s'))$.
According to the above definition, the following lemma is obtained.
\begin{lemma}
\label{lemmaP2-1}
Given an rFTS $s' \in S$ containing only TRs applied to edges and whose length is greater than $|E(g_u(s'))|$, $g_u(P_2(s'))$ is identical to $g_u(s')$.~~~~~~~~~~~~~~$\blacksquare$
\end{lemma}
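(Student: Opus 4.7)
The plan is to separate the claim into two steps: (i) $P_2$ is well-defined on $s'$, i.e.\ the last TR of $s'$ satisfies the precondition in Definition~\ref{def:P2}; and (ii) the TR that $P_2$ deletes carries no vertex or edge of $g_u(s')$ that is not already contributed by some other TR of $s'$.

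For step (i), I would argue by pigeonhole. Each TR in $s'$ is of type $ei$, $ed$, or $er$, so by Definition~\ref{uniongraph2} each TR targets exactly one edge of $g_u(s')$, and every edge of $g_u(s')$ is the target of at least one TR of $s'$. Hence the map from TRs of $s'$ to $E(g_u(s'))$ is surjective. Since the length of $s'$ strictly exceeds $|E(g_u(s'))|$, some edge $o$ must be the target of at least two TRs of $s'$. Combined with the canonical ordering of Definition~\ref{def:canonical}, this forces the TR in the last position of $s'$ to be one whose target edge has already appeared earlier in $s'$, so the precondition of Definition~\ref{def:P2} is satisfied and $P_2(s')$ is defined.

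For step (ii), let $r = tr_{[o,l]}^{(j,k)}$ be the TR removed by $P_2$, and let $r' = tr_{[o,l']}^{(j',k')}$ with $j' < j$ be the earlier TR on the same edge $o$ guaranteed by Definition~\ref{def:P2}. Because $r'$ survives in $P_2(s')$ and targets $o$, the edge $o$ lies in $E(g_u(P_2(s')))$ together with both of its endpoints, so neither $o$ nor its endpoints are lost. Every other TR of $s'$ is untouched, so every other edge of $g_u(s')$ and its endpoints survive in $g_u(P_2(s'))$ as well. Conversely, since $P_2(s')$ is obtained from $s'$ by deletion, $g_u(P_2(s'))$ cannot introduce any vertex or edge not already in $g_u(s')$. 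Hence $V(g_u(P_2(s'))) = V(g_u(s'))$ and $E(g_u(P_2(s'))) = E(g_u(s'))$, giving $g_u(P_2(s')) = g_u(s')$.

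The main obstacle is step (i): pigeonhole alone only tells us that \emph{some} edge is targeted by two TRs, not that the \emph{last} TR of $s'$ is one of them. Justifying this requires leaning on canonicality, i.e.\ that the minimal code under $\prec_{tr}$ arranges the TRs so that when length exceeds $|E(g_u(s'))|$ the tail position is necessarily occupied by a repeat. Making this precise, in the same spirit as the rightmost-path restriction in gSpan, is the delicate part of the argument; step (ii) is then essentially a bookkeeping observation about which vertices and edges of the union graph are witnessed by which TRs.
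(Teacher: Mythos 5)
Your step (ii) is essentially the paper's entire proof: the paper disposes of this lemma with the remark that it ``is proven in the same way as Lemma~\ref{lemmaP1-1}'', meaning precisely your witness argument --- the edge $o$ of the TR removed by $P_2$ is still contributed to $g_u(P_2(s'))$ by the earlier TR $tr_{[o',l']}^{(j',k')}$ with $o'=o$, $j'<j$ that Definition~\ref{def:P2} requires, every other TR is untouched, and a deletion cannot add vertices or edges. So that half of your write-up matches the intended proof.

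The genuine problem is the one you flagged yourself in step (i), and it cannot be repaired by the route you propose: canonicality does \emph{not} force the overall-last TR of $s'$ to target a previously targeted edge. Consider $s'=\langle ei_{[(1,2),-]}^{(1,1)}~ed_{[(1,2),\bullet]}^{(2,1)}~ei_{[(2,3),-]}^{(3,1)}\rangle$: it contains only edge TRs, has length $3>2=|E(g_u(s'))|$, and has a connected union graph, yet its last TR targets the fresh edge $(2,3)$. No appeal to the minimal code helps, because permuting TRs across interstates changes the pattern itself (its occurrence set under Definition~\ref{def:occurrence}), not merely its code, so such sequences are legitimate canonical rFTSs. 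The way out is to read Definition~\ref{def:P2} in parallel with the definition of $P_1$ (which removes the last TR \emph{among the TRs applied to vertices}, not the last TR of the whole sequence): $P_2$ removes the TR occupying the last position \emph{among those TRs that have an earlier TR applied to the same edge}. The paper corroborates this reading in the description of Fig.~\ref{GTRACE3}, where it notes that ``a TR to be added is not always appended to the tail of $s_p$,'' i.e.\ $P_2^{-1}$ may insert a TR mid-sequence, so $P_2$ need not delete the tail. Under this reading, your own pigeonhole observation already completes step (i) with no delicacy at all: each TR of $s'$ targets one edge of $g_u(s')$, every edge is targeted, so length $>|E(g_u(s'))|$ guarantees some TR is a repeat, and $P_2$ removes the last repeat. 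Your step (ii) is indifferent to which TR is removed --- it uses only the earlier same-edge witness --- so the rest of your argument then goes through verbatim.
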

This lemma is proven in the same way as Lemma~\ref{lemmaP1-1}. According to Lemma~\ref{lemmaP2-1}, $P_2(s')$ is always relevant for an rFTS $s'$ containing only TRs applied to edges. In addition, the transformation sequence $P_2(s')$ is frequent according to the anti-monotonicity of the support value and canonical according to Definition~\ref{def:canonical}. Therefore, $P_2(s')$ returns a ``unique'' rFTS in $S$.
\begin{example}
Given $s'_3=\langle ei_{[(1,2),-]}^{(1,1)} ei _{[(2,3),-]}^{(1,2)} ed_{[(2,3),-]}^{(2,1)}\rangle$, $P_2(s'_3)=\langle ei_{[(1,2),-]}^{(1,1)} ei _{[(2,3),-]}^{(1,2)}\rangle$. According to Definition~\ref{def:P2}, we cannot apply $P_2$ to $s'_3$.
\end{example}
Finally, we define the third function of our parent-child relation to enumerate rFTSs.
\begin{definition}
Given an rFTS $s'' \in S$ where each TR is applied to a mutually different edge in $g_u(s'')$, we define function $P_3$ 
mapping from $s''$ to $s'''$. The transformation sequence $s'''$ is derived from $s''$ 
by removing the TR located in the last position in $s''$  
keeping the connectivity of $g_u(s''')$.~~~~~~~~~~~~~~~~~~~~~~~~~~~$\blacksquare$
\label{DeflP3-1}
\end{definition}
If each TR in an rFTS $s$ is not applied to a mutually different edge in $g_u(s)$, we can obtain an rFTS $s''$ where each TR is applied to a mutually different edge in $g_u(s'')$ by applying $P_1$ and $P_2$ to $s$ multiple times.
According to Definition~\ref{DeflP3-1}, $P_3(s'')$ is always relevant for an rFTS $s''$ where each TR is applied to a mutually different edge in $g_u(s'')$, because the connectivity of $g_u(P_3(s''))$ is kept. In addition, the transformation sequence $P_3(s'')$ is frequent according to the anti-monotonicity of the support value and canonical according to Definition~\ref{def:canonical}. Therefore, $P_3(s'')$ returns a ``unique'' rFTS in $S$.

\begin{figure*}
\begin{center}
\begin{tabular}{|ll|}\hline 
&\\
$~~~~~~~P_3(s'_1)=\bot   $ & \\
$s'_1=P_3(s'_2)=\langle~~~~~~~~~~~~~~~~~~~~~~~~ei_{[(1,2),-]}^{(1,1)}  \rangle$ 
& $ \in P_3^{-1}(\bot)$\\
$s'_2=P_2(s'_3)=\langle~~~~~~~~~~~~~~~~~~~~~~~~ei_{[(1,2),-]}^{(1,1)} ei _{[(2,3),-]}^{(1,2)}\rangle $ 
& $\in P_3^{-1}(s'_1)$\\
$s'_3=P_1(s'_4)=\langle~~~~~~~~~~~~~~~~~~~~~~~~ei_{[(1,2),-]}^{(1,1)} ei _{[(2,3),-]}^{(1,2)} 
                     ed_{[(2,3),-]}^{(2,1)}\rangle$ 
& $\in P_2^{-1}(s'_2)$\\
$s'_4=P_1(s'_5)=\langle vi_{[1,A]}^{(1,1)}~~~~~~~~~~~~~~~~ei_{[(1,2),-]}^{(2,1)} 
                     ei _{[(2,3),-]}^{(2,2)} ed_{[(2,3),-]}^{(3,1)}\rangle$
& $\in P_1^{-1}(s'_3)$\\
$s'_5=P_1(s_6)=\langle vi_{[1,A]}^{(1,1)} vi_{[2,B]}^{(2,1)}~~~~~~~~ 
                     ei_{[(1,2),-]}^{(3,1)} ei _{[(2,3),-]}^{(3,2)} ed_{[(2,3),-]}^{(4,1)}\rangle$
& $\in P_1^{-1}(s'_4)$\\
$s_6~~~~~~~~~~~~=\langle vi_{[1,A]}^{(1,1)} vi_{[2,B]}^{(2,1)} 
                      vi_{[3,C]}^{(3,1)} ei_{[(1,2),-]}^{(4,1)} 
                      ei _{[(2,3),-]}^{(4,2)} ed_{[(2,3),-]}^{(5,1)}\rangle$ 
& $\in P_1^{-1}(s'_5)$ \\
&\\
\hline
\end{tabular}
\end{center}
\caption{Parent-child relation between rFTSs.}
\label{GTRACEprocedure2}
\end{figure*}

\begin{example}
Since the rFTS $s_6$ shown in Fig.~\ref{GTRACEprocedure2} contains TRs that are applied to vertices, $P_1$ is applied to $s_6$ three times until the transformation sequence does not contain TRs applied to vertices, and we obtain $s'_5$, $s'_4$, and $s'_3$. Next, $P_2$ is applied to $s'_3$ once, resulting in $s'_2$ where each TR is applied to a mutually different edge in $g_u(P_2(s'_2))$. Finally, $P_3$ is applied to $s'_2$ twice until the transformation sequence becomes $\bot$. 
\end{example}

We have defined our parent-child relation between rFTSs in terms of three functions $P_1$, $P_2$, and $P_3$. Using the search tree $T$ over $S$ defined by $P=\{P_1, P_2, P_3\}$, we can enumerate a complete set of only rFTSs of $S$ by traversing search tree $T$ from its root. In the traversal, $P_1^{-1}$, $P_2^{-1}$, and $P_3^{-1}$ are used to enumerate rFTSs from the current rFTS by adding~a~TR.


\subsection{GTRACE-RS}
In the previous subsection, we defined functions $P_1$, $P_2$, and $P_3$ for the parent-child relation between rFTSs. By traversing the tree constructed by parent-child relations in a reverse direction, we enumerate a complete set of rFTSs without enumerating any FTSs that are not also rFTSs. Figure~\ref{GTRACE3} gives the pseudo code for the proposed method ``GTRACE-RS'' that mines all rFTSs from graph sequences $DB$ and accumulates them in $S$ with the minimum support value $\sigma'$. The method traverses the search tree $T$ over $S$ defined by $P=\{P_1, P_2, P_3\}$ in a depth-first manner. In Fig.~\ref{GTRACE3}, $s_p \Diamond r$ means that a TR $r$ is added to $s_p$ such that $s_p \Diamond r \in P^{-1}_i(s_p)$, where $i=1,2,3$. 
In this process, a TR to be added is not always appended to the tail of $s_p$ unlike the original GTRACE.
$s_p\ne$ min checks whether $s_p$ has been discovered before, where min is the canonical form of $s_p$. 
First, given an rFTS $s_p$ that contains only TRs applied to edges and 
where each TR is applied to a mutually different edge in $g_u(s_p)$,
GTRACE-RS enumerates rFTSs that are $P_3^{-1}(s_p)$. Next, given an rFTS that contains only TRs applied to edges, GTRACE-RS enumerates rFTSs that are $P_2^{-1}(s_p)$. Finally, given an rFTS, GTRACE-RS enumerates rFTSs that are $P_1^{-1}(s_p)$.

\begin{figure}[t]
\begin{center}
\begin{tabular}{rl}
\multicolumn{2}{l}{Input: an rFTS $s_p$, a dataset $DB$, $\sigma'$, and $i=3$.}\\
\multicolumn{2}{l}{Output: the set of rFTSs $S$.}\\ \hline
\multicolumn{2}{l}{GTRACE-RS($s_p$, $DB$,  $\sigma'$, $S$, $i$)}\\ \hline
 1:  & {\bf if} $s_p\ne$ min, then \\
 2:  & ~~~~{\bf return}; \\
 3:  & insert $s_p$ into $S$;\\
 4:  & while $i > 0$ \\ 
 5:	 & ~~~~Call Subprocedure($s_p$, $DB$,  $\sigma'$, $S$, $i$); \\
 6:  & ~~~~$i=i-1$; \\
 7:  & {\bf return;} \\ \hline
	 & \\ \hline
\multicolumn{2}{l}{Subprocedure($s_p$, $DB$,  $\sigma'$, $S$, $i$)}\\ \hline
 1:  & set $C$ to $\emptyset$;\\
 2:  & scan $DB$,\\
 3:  & find all transformation rules $r$ s.t. $s_p \Diamond r \in P^{-1}_i(s_p)$;\\
 4:  & insert $s_p ~\Diamond~ r$ in $C$ and count its frequency; \\
 5:  & {\bf for each} frequent $s_p ~\Diamond~ r$ in $C$ {\bf do} \\
 6:  & ~~~~Call GTRACE-RS($s_p~\Diamond~ r$, $DB$,  $\sigma'$, $S$, $i$); \\
 7:  & {\bf return;} \\ \hline

\end{tabular}
\caption{Algorithm for enumerating rFTSs.}
\label{GTRACE3}
\end{center}
\end{figure}

\section{Implementation}
\label{Sec:Implementation}

In the previous section, we proposed a new method for mining only rFTSs from graph sequences. To mine the rFTSs efficiently, an efficient implementation of the method is also important. In this section, we explain how to implement the proposed method.  

\subsection{Implementation of $P_3^{-1}$}

Given an rFTS $s_p$ that contains only TRs applied to edges and 
where each TR is applied to a mutually different edge in $g_u(s_p)$,
GTRACE-RS enumerates rFTSs in $P_3^{-1}(s_p)$ from $s_p$. During this process, the number of edges in $g_u(P_3^{-1}(s_p))$ 
increases one by one, similarly to gSpan~\cite{yan02}. So we have implemented $P_3^{-1}$ based on gSpan\footnote{Despite having defined canonical codes of rFTSs and implemented $P^{-1}_3$ using gSpan, we could have implemented these similarly using FSG~\cite{kuramochi2001} or Gaston~\cite{nijssen04}, which are algorithms for solving the frequent graph mining problem.}. Since gSpan is an algorithm for mining frequent subgraph patterns from colored graphs\footnote{We use the term ``color'' instead of ``label'' to distinguish labels of graphs in frequent graph mining problems and labels of graph sequences in our problem.}, it cannot be applied directly to graph sequences. We extend the tuple $(u,u',c)$ of the DFS code used in gSpan to a tuple $(u,u',(l,tr,j))$ as a TR $tr_{[(u,u'),l]}^{(j,k)}$. 
An rFTS $s_p$ is extended based on the pattern-growth principle. Given a code $\alpha=(a_1,\cdots,a_k)$ for $s_p$, $\alpha$ is extended by adding a tuple $a_{k+1}$ in the form of $(u,u',(l,tr,j))$ to generate $\alpha'=(a_1,\cdots,a_k,a_{k+1})$. 
A TR, represented by the tuple, to transform an edge between vertices $u$ and $u'$ on the rightmost path in $g_u(s_p)$ can be added (backward extension), or 
a TR, represented by the tuple, to transform an edge between the rightmost vertex $u$ and another vertex $u'$ that does not exist in $g_u(s_p)$ can be added (forward extension).


\subsection{Projection}

After mining rFTSs using $P_3^{-1}$, we shrink the transformation sequences in $DB$ for the sake of computational efficiency. We call this procedure the projection. When $P_1^{-1}$ and $P_2^{-1}$ grow an rFTS $s_p$ by adding TRs one by one, the union graphs of any rFTSs generated by $P_2^{-1}$ and $P_1^{-1}$ are isomorphic with $g_u(s_p)$ according to Lemmas~\ref{lemmaP1-1} and \ref{lemmaP2-1}; that is, vertex IDs in any TRs added to $s_p$ by $P_1^{-1}$ and $P_2^{-1}$ are identical with vertex IDs in $s_p$~(A). 
In addition, when $P_2^{-1}$ grows an rFTS by adding TRs $tr_{[o,l]}^{(j,k)}$ one by one, there must exist a TR $tr_{[o',l']}^{(j',k')}$ in $s_p$ such that $o=o'$ and $j'<j$ according to Definition~\ref{def:P2} (B). 
Based on the above discussion, we define the projection as follows.

\begin{definition}
Given a transformation sequence $s_d$ of $\langle gid, d\rangle   \in DB$ and an rFTS $s_p$ mined by $P_3^{-1}$ such that $s_p \sqsubseteq s_d$ via $(\phi,\psi)$, 
we define the projection of $s_d$ onto its maximum subsequences $s'_d$ satisfying the following conditions.
\begin{itemize}
\item[(A)] $\{\psi(u),\psi(u') \mid tr_{[(u,u'),l]}^{(j,k)} \in s_p \}  = \{u,u',u'' \mid tr_{[(u,u'),l]}^{(j,k)} \in s'_d, tr_{[u'',l]}^{(j,k)} \in s'_d \}$.
\item[(B)] TRs $tr^{(j',k')}_{[o',l']} \in s_d$ such that $\psi(o)=o'$ and $\phi(j)<j'$, are included in $s'_d$ if $tr^{(j,k)}_{[o,l]} \in s_p$.~~~~~~~~~~~~~~~~~~~~~~~~~~~~~~~~~~~~~~~~~~~~~~~~~~~~~~~~~~~~~~~~~~~~~~~$\blacksquare$
\end{itemize}
\end{definition}
Projected transformation sequences are used in the implementation of $P_1^{-1}$ and $P_1^{-2}$, as explained in Section~\ref{ImpPrefixSpan}.

\subsection{Implementation of $P_1^{-1}$ and $P_1^{-2}$}
\label{ImpPrefixSpan}

\begin{figure}[t]
\centering
\includegraphics[height=58mm]{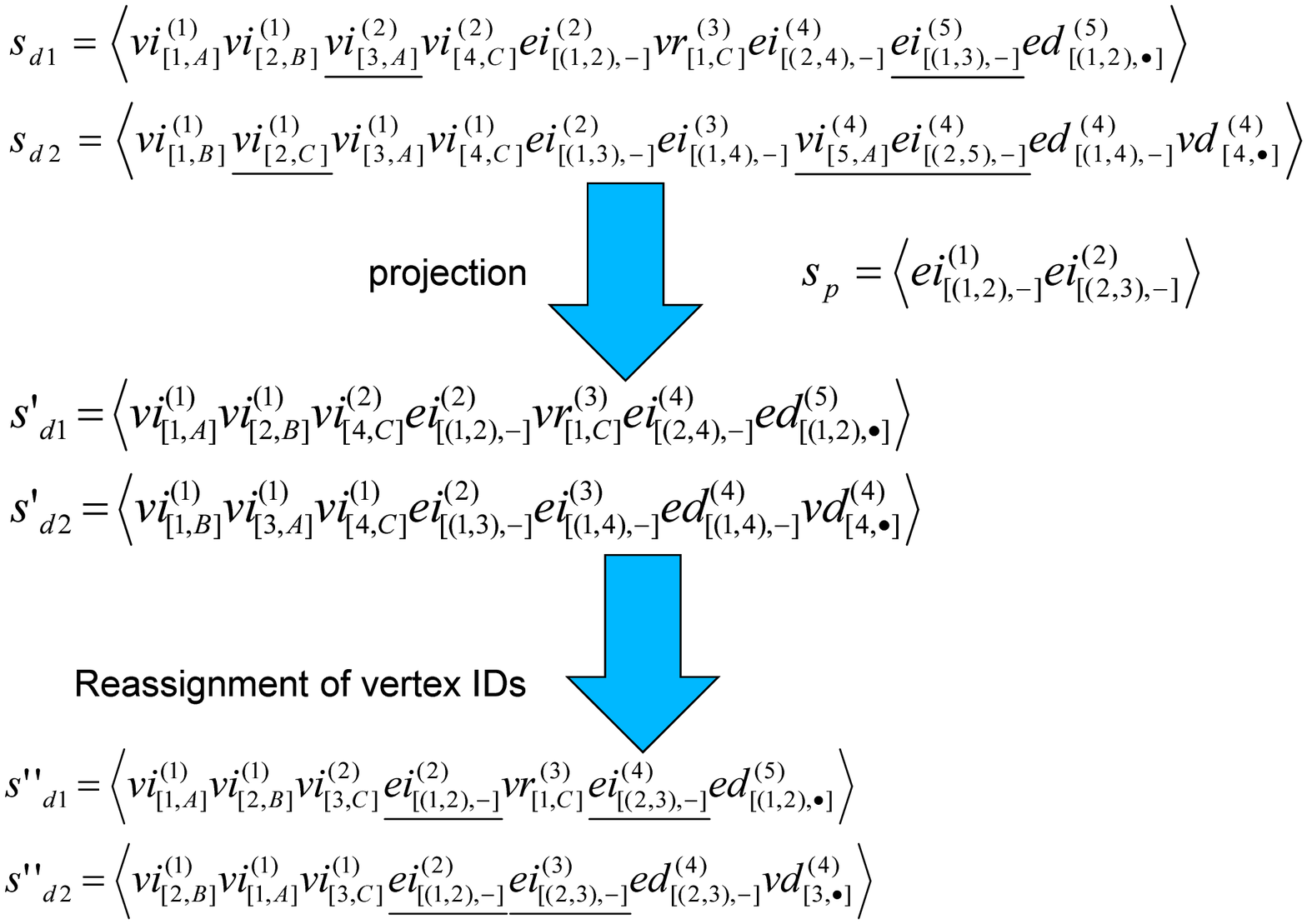}
\caption{Reassignment of vertex IDs (values for $k$ are omitted in this figure).}
\label{reassignment}
\end{figure}

In the rest of this section, using a concrete example, we explain our implementation of $P_1^{-1}$ and $P_2^{-1}$ to enumerate rFTSs. 
In Section 3, we defined $P_1$ and $P_2$ separately for ease of explanation. However, in our implementation, $P_1^{-1}$ and $P_2^{-1}$ are implemented jointly, because of the efficiency of our GTRACE-RS. For the sake of simplicity, in this subsection, we assume that $DB$ contains only two graph sequences $\langle gid_1,d_1\rangle$ and $\langle gid_2,d_2\rangle$ as shown in Fig.~\ref{reassignment}, and that an rFTS $s_p$ has been mined from $DB$ using $P_3^{-1}$ under $\sigma'=2$. In addition, we consider only two of the mappings from vertex IDs in $s_p$ to vertex IDs in the transformation sequences $s_{d1}$ and $s_{d2}$, although there are other mappings. The two mappings $\psi_1$ and $\psi_2$ are given as 
\begin{eqnarray}
\psi_1(1)=1,\psi_1(2)=2,\psi_1(3)=4, \label{map1}\\
\psi_2(1)=3,\psi_2(2)=1,\psi_2(3)=4. \label{map2}
\end{eqnarray}
In this example, vertex ID 1 in $s_p$ corresponds to vertex ID 3 in $s_{d2}$, because $\psi_2(1)=3$. Since vertex ID 3 in $s_{d1}$ and vertex IDs 2 and 3 in $s_{d2}$ do not correspond to any vertex IDs in $s_p$, the underlined TRs in $s_{d1}$ and $s_{d2}$ in Fig.~\ref{reassignment} are removed in the projection, and the projected transformation sequences $s'_{d1}$ and $s'_{d2}$ shown in the middle of Fig.~\ref{reassignment} are derived. Transformation sequences $s'_{d1}$ and $s'_{d2}$ are used in $P_1^{-1}$ and $P_2^{-1}$ to grow $s_p$. Therefore, all rFTSs mined from $s'_{d1}$ and $s'_{d2}$ must contain $s_p$ as a subsequence.

Subsequently, to check efficiently whether TRs in $s'_{d1}$ and $s'_{d2}$ correspond to each other, we convert the projected transformation sequence $s'_{d1}$ and $s'_{d2}$ by reassigning all vertex IDs in the projected transformation sequences, as described below. We are aware of all the embeddings of $s_p$ in $s'_{d1}$ and $s'_{d2}$, as well as their mappings from vertex IDs in $s_p$ to vertex IDs in $s'_{d1}$ and $s'_{d2}$ as given by Eqs.~(\ref{map1}) and (\ref{map2}). In this example, we reassign vertex ID 4 in $s'_{d1}$ to vertex ID 3, because we know $\psi_1(3)=4$. The reassigned transformation sequences $s''_{d1}$ and $s''_{d1}$ are shown at the bottom of Fig.~\ref{reassignment}.
By reassigning the vertex IDs in $s'_{d1}$ and $s'_{d2}$, corresponding TRs are written in the same representation, except for $j$ in $tr_{[o,l]}^{(j)}$. 
For example, a TR $vi_{[2,B]}$ is written in the same representation in $s''_{d1}$ and $s''_{d2}$, although the rule is written as $vi_{[2,B]}$ and $vi_{[1,B]}$ in $s'_{d1}$ and $s'_{d2}$, respectively. 

Next, because the corresponding rules are written in the same representation, we further convert the reassigned transformation sequences $s''_{d1}$ and $s''_{d2}$ in Fig.~\ref{reassignment} to sequences of itemsets as follows. 
\begin{eqnarray}
&&s_1=\langle gid_1,\langle (i_1~i_2)~(i_3~\underline{i_4})~i_5~\underline{i_6}~i_7 \rangle \rangle, \label{s1}  \\
&&s_2=\langle gid_2,\langle (i_1~i_2~i_3)~\underline{i_4}~\underline{i_6}~(i_8~i_9) \rangle \rangle, \label{s2}
\end{eqnarray}
where $vi_{[1,A]}$ in $s''_{d1}$ and $s''_{d2}$ is converted to $i_1$, $vi_{[1,B]}$ to $i_2$, and so on. In addition, items in $s_1$ and $s_2$ are placed in the same parentheses if their corresponding TRs exist in the same intrastate transformation sequence. For brevity, parentheses are omitted if an itemset has only one item. In these sequences of itemsets, an underlined item appears in all of the converted sequences as a subsequence $\langle i_4~i_6\rangle$, because $s=\langle i_4~i_6\rangle$ corresponds to $s_p$. To shorten the converted sequences, we remove the subsequence $s$ from the converted sequences to obtain other converted sequences as follows.
\begin{eqnarray}
&&s'_1=\langle gid_1,\langle (i_1^{<1st}~i_2^{<1st})~i_3^{=1st}~i_5^{<2nd}~i_7 \rangle \rangle,  \nonumber \\
&&s'_2=\langle gid_2,\langle (i_1^{<1st}~i_2^{<1st}~i_3^{<1st})~(i_8~i_9) \rangle \rangle, \nonumber 
\end{eqnarray}
where $i_3^{<1st}$ denotes that $i_3$ appears before the 1st item in $s$, and $i_3^{=1st}$ denotes that $i_3$ appears at the same time as the 1st item in $s$. 
Using this representation and $s$, we can re-convert $s'_1$ and $s'_2$ to $s_1$ and $s_2$, respectively. Now, we have two sequences of itemsets $s'_1$ and $s'_2$. By applying PrefixSpan~\cite{pei01} to the sequences of itemsets under $\sigma'=2$, we obtain three frequent sequential patterns $\{\langle i_1^{<1st} \rangle, \langle i_2^{<1st}\rangle, \langle(i_1^{<1st}~i_2^{<1st}) \rangle\}$\footnote{Although 17 frequent sequential patterns are mined from $s_1$ and $s_2$, some of these are not rFTSs. We discuss this in detail in Section~6.}. After mining the frequent sequential patterns, we re-convert each item in the frequent sequential patterns to a TR to obtain three rFTSs: 
$\langle vi^{(1)}_{[1,A]} ei^{(2)}_{[(1,2),-]} ei^{(3)}_{[(2,3),-]}\rangle$, 
$\langle vi^{(1)}_{[2,B]} ei^{(2)}_{[(1,2),-]} ei^{(3)}_{[(2,3),-]}\rangle$, and 
$\langle vi^{(1)}_{[1,A]} vi^{(1)}_{[2,B]} ei^{(2)}_{[(1,2),-]} ei^{(3)}_{[(2,3),-]}\rangle$.

In PrefixSpan, we can check in $O(1)$ whether TRs in $s''_{d1}$ and $s''_{d2}$ correspond to each other, because two corresponding TRs in the two reassigned transformation sequences have the same vertex IDs. By reassigning vertex IDs in projected transformation sequences and converting transformation sequences to sequences of itemsets, we avoid graph isomorphism matching between two transformation sequences. Since we can quickly check whether two TRs correspond by comparing two items, we can efficiently mine rFTSs from large and long graph sequences.

\section{Experiments}
\label{Section:exp}

The proposed method was implemented in C++. The experiments were executed on an HP Z600 computer with an Intel Xeon X5560 2.80GHz processor, 4 GB of main memory and running Windows 7. The performance of the proposed method was evaluated using artificial and real-world graph sequence data. 


\subsection{Artificial Datasets}

\begin{table}[t]
\begin{center}
\caption{Parameters of the artificial datasets.}
\label{default}
\begin{tabular}{ll}\hline 
Parameters 									& Default values\\ \hline 
Probability of vertex and edge insertions 	& \\
~~~~~~~~~~~~~in transformation sequences	& $p_i=$		80\%\\
Probability of vertex and edge deletions  	& \\
~~~~~~~~~~~~~in transformation sequences	& $p_d=$		10\%\\
Average number of vertex IDs 				& \\
~~~~~~~~~~~~~in transformation sequences	& $|V_{avg}|=$ 6\\
Average number of vertex IDs 				& \\
~~~~~~~~~~~~~in embedded FTSs				&$|V'_{avg}|=$	3  \\ 
Number of vertex labels						& $|L_v|=$		5 \\ 
Number of edge labels						& $|L_e|=$		5 \\ 
Number of embedded FTSs						& $N=$			10 \\ 
Number of transformation sequences			& $|DB|=$		1,000 \\ 
Edge existence probability between vertices & $p_e=$15\%\\
Average edit distance between interstates   & $d_{ist}=$2\\
Minimum support threshold					& $\sigma'=$	10\% \\ 
\hline
\end{tabular}
\end{center}
\end{table}

We compared the performance of the proposed method with the original GTRACE~\cite{inokuchi2008} using artificial datasets generated from the parameters listed in Table~\ref{default}. 
First, starting from $g^{(1)}$ with $|V_{avg}|/2$ vertices generated with edge existence probability $p_e$, we grew each graph sequence to include $|V_{avg}|$ vertex IDs on average, by applying two ($=d_{dist}$) of (A) inserting with probability $p_i$, (B) deleting with probability $p_d$, and (C) relabeling vertices and edges with probability $1-p_i-p_d$ at each interstate. Accordingly, if $p_i$ is small or $|V_{avg}|$ is large, the generated transformation sequence is long. This process is continued until the sequence becomes relevant by increasing the numbers of vertices and edges. We generated $|DB|$ graph sequences. Similarly, we generated $N$ rFTSs with $|V'_{avg}|$ vertex IDs on average. We then generated the $DB$ in which each graph sequence was overlaid by an rFTS with probability $1/N$. Each graph sequence contained $|L_v|$ vertex labels and $|L_e|$ edge labels.

Table~\ref{expLabel} lists the computation times [sec], the numbers of rFTSs mined by the proposed method, and the numbers of FTSs mined by the first step of the original GTRACE
for varying values of $|DB|$, $|V_{avg}|$, $p_i$, $L_e$, and $\sigma'$, with the other parameters set to their default values. In the table, ``--'' indicates that no results were obtained because of intractable computation times exceeding 2 hours. In addition, PM and GT denote the proposed method and GTRACE, respectively.

\begin{table}[b]
\begin{center}
\caption{Results for various $|DB|$, $|V_{avg}|$, $p_i$, $|L_e|$, and $\sigma'$.}
\begin{tabular}{l|rrrr}\hline 

$|DB|$			&1,000	&3,000	&7,000	&10,000	\\ \hline
avg. len. 		&42.9	&44.0	&43.5	&43.4	\\ \hline
PM comptime		&2.0	&6.5	&15.0	&21.4	\\
~\# of rFTSs	&6307	&6872	&6325	&6266	\\
GT comptime		&217.3	&951.2	&2226.2	&3465.3	\\			
~\# of FTSs		&171787	&190876	&170387	&170902	\\	

\hline \hline 
		
$|V_{avg}|$		&6		&8		&15			&20\\ \hline
avg. len. 		&42.9	&61.2	&135.5		&188.7\\ \hline
PM comptime		&2.0	&6.3	&228.7		&4335.0\\
~\# of rFTSs	&6307	&20755	&3653370	&81012875\\
GT comptime		&217.3	&3599.3	&-			&-\\
~\# of FTSs		&171787	&936115	&-			&-\\

		\hline\hline	
				
$p_i$	[\%]	&55			&70			&80		&100\\ \hline
avg. len. 		&116.7		&65.0		&42.9	&18.7\\ \hline
PM comptime		&419.0		&7.5		&2.0	&0.7\\
~\# of rFTSs	&4458046	&58251		&6307	&585\\
GT comptime		&-			&4132.2		&217.3	&8.5\\
~\# of FTSs		&-			&2355657	&171787	&9765\\

		\hline\hline	

$|L_e|$			&1			&3		&7		&10\\ \hline
avg. len. 		&43.5		&43.7	&43.8	&43.0\\ \hline
PM comptime		&54.5		&5.1	&1.9	&0.97\\
~\# of rFTSs	&70257		&12972	&5318	&3333\\
GT comptime		&2195.9		&644.0	&205.9	&122.1\\			
~\# of FTSs		&2995499	&474564	&132538	&62897\\	
				
		\hline\hline	
		
$\sigma'$ [\%]	&5			&7.5		&10		&15		\\\hline
avg. len. 		&42.9		&42.9		&42.9	&42.9	\\ \hline
PM comptime		&377.1		&30.7		&2.0	&1.8	\\
~\# of rFTSs	&90607156	&5744037	&6307	&1630	\\
GT comptime		&2122.4		&472.3		&217.3	&93.3	\\
~\# of FTSs		&176177313	&11891069	&171787	&43100	\\
\hline

\multicolumn{5}{l}{PM: Proposed Method, GT: the original GTRACE, comptime: }\\
\multicolumn{5}{l}{computation time [sec], avg. len.: average length of transformation }\\
\multicolumn{5}{l}{sequences, \# of rFTSs (or FTSs): the number of mined rFTSs}\\
\multicolumn{5}{l}{(or FTSs)}\\
\end{tabular}
\label{expLabel}
\end{center}
\end{table}

The first part of Table~\ref{expLabel} shows that the computation time is proportional to the number of graph sequences $|DB|$, as is the case in conventional frequent pattern mining.
The second and third parts of the table indicate that the computation times for both GTRACE-RS and GTRACE are exponential with respect to an increase in the average number of vertices $|V_{avg}|$ in the graph sequences and a decrease in the probability $p_i$ of vertex and edge insertions in the graph sequence. 
The main reason that the computation time increases with the average length seems to be the increase in the numbers of rFTSs in both cases.
However, the far superior efficiency of the proposed method compared to GTRACE is confirmed by the computation times. 
The fourth part of Table~\ref{expLabel} shows the effect of the number of labels on the efficiency. When $|L_e|$ is small, many transformation subsequence are isomorphic with each other, and thus the computation times for GTRACE and GTRACE-RS increase. However, the computation time for GTRACE-RS remains smaller, since it mines only rFTSs.
The fifth part of Table~\ref{expLabel} shows that the proposed method is tractable even with a low minimum support threshold.


All parts of Table~\ref{expLabel} show that the number of rFTSs mined by the proposed method is much smaller than the number of FTSs mined by GTRACE. By mining only rFTSs, the proposed method efficiently mines a complete set of rFTSs from a set of graph sequences. Therefore, the proposed method is applicable in practice to graph sequences that are both long and large.

\subsection{Real-World Dataset}

To assess the practicality of the proposed method, it was applied to the Enron Email Dataset~\cite{enron,Klimt}.
In the dataset, we assigned a vertex ID to each person participating in an email communication, and assigned an edge to a pair communicating via email on a particular day, thereby obtaining a daily graph $g^{(j)}$. In addition, one of the labels \{CEO, Employee, Director, Manager, Lawyer, President, 
Trader, Vice President\} was assigned to each vertex and we labeled each edge according to the volume of mail. 
We then obtained a set of weekly graph sequence data, {\it i.e.}, a $DB$.
The total number of weeks, {\it i.e.}, number of sequences, was 123. We randomly sampled $|V|$ ($=1 \sim 182$) persons to form each~$DB$. 

\begin{table}[t]
\begin{center}
\caption{Results for the Enron dataset.}
\begin{tabular}{l|cccc}\hline 

\# of persons $|V|$	&100	&140	&150	&182 \\ \hline
PM comptime			&0.5	&2.2	&15.9	&278.6\\
~\# of rFTSs		&33227	&31391	&47015	&1558833\\
GT comptime			&16.8	&118.3	&-		&-\\
~\# of FTSs			&66072	&154541	&-		&-\\

		\hline\hline	

min. sup. $\sigma'$[\%]	&40		&30		&20		&10 \\ \hline
PM comptime				&2.0	&6.1	&30.0	&278.6\\
~\# of rFTSs			&974	&3548	&14419	&158833\\
GT comptime				&14.4	&387.3	&-		&-\\
~\# of FTSs				&4129	&29253	&-		&-\\

		\hline\hline

\# of interstates $n$	&4		&5		&6		&7	\\ \hline
PM comptime				&5.85	&34.1	&95.6	&278.6\\
~\# of rFTSs			&5542	&21214	&51727	&158833\\
GT comptime 			&423.8	&-		&-		&-		\\ 
~\# of FTSs				&67997	&-		&-		&-\\ \hline

\multicolumn{5}{l}{Default: minimum support $\sigma'=$10\%, \# of vertex labels $|L_v|=8$,}\\
\multicolumn{5}{l}{~\# of edge labels $|L_e|=5$, \# of persons $|V|=182$, \# of interstates }\\
\multicolumn{5}{l}{~$n$=7. PM: Proposed Method, GT: the original GTRACE}\\

\end{tabular}
\label{TableEnron}
\end{center}
\end{table}

Table~\ref{TableEnron} shows the computation times (comptime [sec]) and the numbers of mined rFTSs or FTSs (\# of rFTSs or \# of FTSs) obtained for various numbers of vertex IDs (persons) $|V|$, minimum support $\sigma'$, and numbers of interstates $n$ in each graph sequence of the dataset. All the other parameters were set to the default values indicated at the bottom of the table. 
Thus, the dataset with the default values contained 123 graph sequences each consisting of 182 persons (vertex IDs) and 7 interstates. The parameter $|l_{avg}|=$4, 5, 6, or 7 indicates that each sequence $d$ in $DB$ consists of 4, 5, 6, or 7 interstates from Monday to Thursday, Friday, Saturday, or Sunday, respectively. 

The upper, middle, and lower parts of the table show the practical scalability of the proposed method with regard to the number of persons (vertex IDs), the minimum support threshold, and the number of interstates in graph sequences in the graph sequence database, respectively. The original GTRACE proved intractable for the graph sequence dataset generated from the default values, despite the change in each graph in this graph sequence database being gradual. On the other hand, execution of the proposed method is tractable with respect to the database. 
Good scalability of the proposed method is indicated in Table~\ref{TableEnron}, because the computation times for the proposed method are smaller than those for the original GTRACE. 
The scalability of the proposed method comes from mining only rFTSs based on the principle of a reverse search and the efficient implementation as discussed in Section~4. 


\section{Discussion}
\label{Discussion}


In Section~\ref{ImpPrefixSpan}, we mentioned that 17 frequent sequential patterns are mined from $s_1$ and $s_2$ given by Eqs.~(\ref{s1}) and~(\ref{s2}), respectively.
Some of the mined patterns from $s_1$ and $s_2$ are given below. 
\begin{eqnarray}
&&\langle i_1 i_6\rangle =\langle vi^{(1)}_{[1,A]}ei^{(2)}_{[(2,3),-]}\rangle, \label{notRFTS1}\\
&&\langle (i_1 i_2)\rangle =\langle vi^{(1)}_{[1,A]}vi^{(1)}_{[2,B]}\rangle, \label{notRFTS2}\\
&&\langle (i_1 i_2) i_4\rangle =\langle vi^{(1)}_{[1,A]}vi^{(1)}_{[2,B]}ei^{(2)}_{[(1,2),-]}\rangle, \\
&&\langle (i_1 i_2) i_6\rangle =\langle vi^{(1)}_{[1,A]}vi^{(1)}_{[2,B]}ei^{(2)}_{[(2,3),-]}\rangle. \label{notRFTS3}
\end{eqnarray}
Fourteen frequent sequential patterns, including Eqs.~(\ref{notRFTS1}) to (\ref{notRFTS3}), of the 17 patterns should not be mined from the projected transformation sequences with respect to $s_p=\langle ei_{[(1,2),-]}^{(1)}  ei_{[(2,3),-]}^{(2)}\rangle$, because the transformation sequences shown as Eqs.~(\ref{notRFTS1}) to (\ref{notRFTS3}) do not contain $s_p$ as a proper subsequence according to the principle of the reverse search. In addition, the transformation sequences shown as Eqs.~(\ref{notRFTS1}), (\ref{notRFTS2}), and (\ref{notRFTS3}) are not rFTSs. By converting $s_1$ and $s_2$ to $s'_1$ and $s'_2$, respectively, we mine only rFTSs that should be mined from the projected transformation subsequences.


\section{Conclusion}
\label{Section:Conclusion}

In this paper, we proposed an efficient method for mining all rFTSs from a given set of graph sequences. We developed a graph sequence mining program, and confirmed the efficient and practical performance of the proposed method through computational experiments using artificial and real-world datasets.
The method proposed in this paper efficiently enumerates all rFTSs from a set of graph sequences, whereas the methods in~\cite{Borgwardt, Berlingerio} mine all frequent patterns from a long graph sequence. In \cite{kuramochi,fiedler}, it is shown that the principle of growing possible patterns can be distinguished from the principle of counting support values of the patterns. Therefore, the proposed method in this paper can be extended to mine rFTSs from a long and large graph sequence based on \cite{kuramochi,fiedler}. By extending our method to mine from graph sequences, we plan to compare the performance of our method with that of Berlingerio's recently proposed~method.


\end{document}